\newcommand{\CC}{\mathbb C}
\newcommand{\NN}{\mathbb N}
\newcommand{\PP}{\mathbb P}
\newcommand{\RR}{\mathbb R}
\newcommand{\ZZ}{\mathbb Z}
\newcommand{\EE}{\mathbb E}
\newcommand{\hh}{\mathcal H}
\newcommand{\kk}{\mathcal K}
\newcommand{\nn}{\mathcal N}
\newcommand{\dir}[1]{\mathbf{#1}}
\newtheorem{thm}{Theorem}[section]
\newtheorem{cor}[thm]{Corollary}
\newtheorem{prop}[thm]{Proposition}
\newtheorem{rem}[thm]{Remark}
\begin{document}
\author{M Krishna
 \\ Institute of Mathematical Sciences, \\
Taramani, Chennai 600113, India \\
krishna@imsc.res.in \\
\\
Peter Stollmann \\
Department of Mathematics \\
Technical University of Chemnitz \\
09107 Chemnitz, Germany \\
peter.stollmann@mathematik.tu-chemnitz.de}

\title{Direct integrals and spectral averaging}
\date{}
\maketitle
\begin{abstract}
A one parameter family of selfadjoint operators gives rise to a corresponding
direct integral. We show how to use the Putnam Kato theorem to obtain a new
method for the proof of a spectral averaging result.
\end{abstract}

\section{Introduction}
To us the basic issue of spectral averaging is to derive continuity properties
of an
integral of spectral measures; thus we  consider a selfadjoint operator $A$ in
a separable Hilbert space $\hh$ as well as a bounded operator $B$ on $\hh$,
$B\ge 0$ and denote $H(t):= A+tB$.  We write $\rho_{H(t)}^\Phi$ for the
spectral measure of $H(t)$ with respect to the vector $\Phi\in\hh$. Our main
result is

\begin{thm}\label{Thm1}
Let $H(\cdot)$ be as above and let $\Phi \in \overline{Range (B)}$.  Then
the
measures
$$
\nu  = \int \rho_{H(t)}^{\Phi} h(t) dt
$$
are absolutely continuous (with respect to Lebesgue measure) for any $h \in
L^1(\RR)$.
\end{thm}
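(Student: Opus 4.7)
The plan is to realise $\nu$ as the spectral measure of a single self-adjoint operator on the direct integral $\kk := L^2(\RR, dt; \hh)$, and then to deduce its absolute continuity from a positive-commutator identity via the Putnam--Kato theorem.

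By writing $h = (h_1 - h_2) + i(h_3 - h_4)$ with $h_j \ge 0$ in $L^1(\RR)$, the statement reduces to the case $0 \le h \in L^1(\RR)$. Set $\phi := \sqrt{h} \in L^2(\RR)$ and $\Psi := \phi \otimes \Phi \in \kk$, and let $\hat H := \int^\oplus H(t)\,dt$ be the self-adjoint operator on $\kk$ acting fiberwise by $H(t)$. The standard direct-integral decomposition $E_{\hat H}(S) = \int^\oplus E_{H(t)}(S)\,dt$ yields
\[
\rho_{\hat H}^\Psi(S) \;=\; \int_\RR |\phi(t)|^2\, \rho_{H(t)}^\Phi(S)\,dt \;=\; \nu(S),
\]
so Theorem~\ref{Thm1} will follow once $\Psi$ is shown to lie in the absolutely continuous subspace of $\hat H$. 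Put $\hat B := B \otimes I$; then $\overline{Range(\hat B)} = L^2(\RR; \overline{Range(B)})$, and the hypothesis $\Phi \in \overline{Range(B)}$ forces $\Psi \in \overline{Range(\hat B)}$. Hence it suffices to show $\overline{Range(\hat B)} \subset \hh_{\mathrm{ac}}(\hat H)$.

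To produce the positive commutator, I would introduce the self-adjoint auxiliary operator $\hat T := I \otimes (i\,d/dt)$ on $\kk$. Writing $\hat H = A \otimes I + B \otimes Q$ with $Q$ multiplication by $t$, and using the canonical relation $[Q, i\,d/dt] = -iI$ on a common core (together with the fact that $A \otimes I$ commutes with $\hat T$), a direct computation gives the key identity
\[
i[\hat H, \hat T] \;=\; \hat B \;\ge\; 0.
\]
The Putnam--Kato theorem then places $\overline{Range(\hat B)}$ inside $\hh_{\mathrm{ac}}(\hat H)$, finishing the proof.

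The main obstacle is that both $\hat H$ and $\hat T$ are unbounded, whereas the classical Putnam--Kato theorem is normally stated for bounded self-adjoint operators. I would handle this either by localising $t$ to compact intervals $I \subset \RR$ (so that on $L^2(I;\hh)$ the operator $\hat H$ is a bounded perturbation of a self-adjoint operator, the commutator structure is preserved, and $|I| \to \infty$ can be handled via the stability of the a.c. subspace under strong resolvent limits), or, more directly, by deducing Kato smoothness of $\hat B^{1/2}$ with respect to $\hat H$ from the commutator identity through the virial-type formula
\[
\frac{d}{ds}\langle e^{-is\hat H} u,\, \hat T\, e^{-is\hat H} u\rangle \;=\; \langle e^{-is\hat H} u,\, \hat B\, e^{-is\hat H} u\rangle.
\]
Controlling the boundary terms as $s \to \pm\infty$ on a dense set of vectors $u$ (so as to extract the uniform $L^2$-in-time bound) is the one technical point requiring genuine care; the rest of the argument is a clean application of direct-integral formalism.
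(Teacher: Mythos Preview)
Your overall architecture matches the paper's: realise $\nu$ as a spectral measure of the direct integral and then feed a positive commutator into Putnam--Kato. The reduction to $h\ge 0$, the identification $\rho_{\hat H}^{\phi\otimes\Phi}=\nu$, and the observation $\overline{Range(\hat B)}=L^2(\RR)\otimes\overline{Range(B)}$ are all exactly what is used.

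Where your proposal parts company with the paper---and where the argument is not yet complete---is the choice of conjugate operator. With $\hat T=I\otimes P$ the commutator $i[\hat H,\hat T]=\hat B$ is formally perfect, but Putnam--Kato genuinely needs the conjugate to be \emph{bounded}: the whole mechanism is that $s\mapsto\langle e^{-is\hat H}u,\hat T e^{-is\hat H}u\rangle$ is a priori bounded by $\|\hat T\|\,\|u\|^2$, and integrating its derivative yields the Kato--smoothness bound for $\hat B^{1/2}$. With unbounded $\hat T$ that boundary control is exactly what is missing, so your virial route does not close. The localisation idea fares no better: restricting to $L^2(I;\hh)$ forces boundary conditions on $i\,d/dt$, and the clean identity $[Q,i\,d/dt]=-iI$ then fails (periodic b.c.\ produce a singular correction, Dirichlet b.c.\ destroy self-adjointness), so the commutator structure is not preserved.

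The paper resolves this by a change of gauge rather than a limiting argument: it replaces $Q$ by the bounded $T=\tanh Q$ and $P$ by the bounded $D=\arctan P$, and works with $\hat{\dir H}=\int^\oplus(A+\tanh(t)B)\,dt$. Then $i[\hat{\dir H},\dir D]=B\otimes C$ with $C=i[\tanh Q,\arctan P]$ bounded and \emph{positive definite}, so Putnam--Kato applies out of the box and gives absolute continuity on $\overline{Range(B)}\otimes L^2(\RR)$. The substitution $s=\tanh t$ (a smooth diffeomorphism onto $(-1,1)$, so it preserves absolute continuity) converts this into $\int\rho_{A+sB}^{\Phi}|g(s)|^2\,ds\ll ds$ for compactly supported bounded $g$, after which approximation in $L^1$ finishes the job. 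In short, the missing idea in your proposal is precisely this ``bounded replacement'' trick; once you insert it, the rest of your write-up goes through unchanged.
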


Results of this type have quite a history and due to their importance for random
operators, the interest has been steady. We refer to
\cite{Kotani-84,SimonW-86,Howland-87}
and the
references in there for early results, partly building on even older work
\cite{Donoghue-65} and to \cite{CHK-03,CHK-07,Sto-10} for
the more recent state of matters.
Note however that we concentrate on one part of the intrigue, the continuity of
the integrated spectral measures, while the emphasis in the cited works is
somewhat different. There the main point is to deduce the spectral type of the
single operators $H(t)$ the integral is made of. Clearly, in the setting of our
main result nothing can be said about that.

The main improvement that had happened during the last 20 years of
development is the generality of the operator $B$ that appears, a feature that
is of prime importance for applications to random operators. One of the
main ideas that enter the usual proof, as presented, e.g. in \cite{Sto-10}, has
also been fundamental in the adaptation of the fractional moment method to
continuum
models, cf. \cite{AENSS-06}. It uses the fact that a maximally accretive
operator can always be obtained as the dilation of a selfadjoint operator. In
contrast, in the early  papers $B$ was merely a rank one projection which
already turned out to be extremely useful for discrete random models.

Our proof of the above theorem is quite different: we consider
$$
\dir{H}:=\int^\oplus_{\RR} H(t)dt \mbox{  in  }\int^\oplus_{\RR} \hh dt
$$
and apply the Kato-Putnam theorem to this operator to show that some of its
spectral measures are absolutely continuous. (In the next section we recall the
necessary notions from the theory of direct integrals of Hilbert spaces.) We
should like to point out that the idea to apply Mourre theory to obtain
spectral averaging results can be found in \cite{CHM-96}, leading to a somewhat
different proof that is nevertheless quite related to what we have
done here. A major point in the present paper is the simplicity of the method.

\section{Spectral averaging and direct integrals}\label{sec2}
What we need about direct integrals can be found in \cite{RS-4}, p. 280 ff.

As we remarked above, we are dealing with a separable Hilbert space $\hh$ and
consider the constant fibre direct integral

$$
\kk = L^2(\RR , \hh) = \int^{\oplus}_{\RR} \hh dt,
$$
with the inner product $\langle f, g \rangle_{\kk} =
\int \langle \overline{f(t)}, g(t)\rangle_{\hh}  ~ dt $.
The direct integral of a selfadjoint operator function is described in:
\begin{rem}\label{Rem}
Let $H(t)$ be selfadjoint in $\hh$ for $t\in\RR$. Then $D(H):=\{ f\in L^2(\RR,
\hh)\mid f(t)\in D(H(t))\mbox{  for a.e.  }t\in\RR, \int_{\RR}\| H(t)f(t)\|^2
dt<\infty \}$, $Hf:=\int^{\oplus}_{\RR}H(t)f(t)dt$ defines a selfadjoint
operator. It follows that $\phi(H)$ is decomposable for any bounded measurable
$\phi:\RR\to\CC$ and
$$
\phi(H)=\int^{\oplus}_{\RR}\phi(H(t))dt .
$$
In particular,
$$
\langle E_H(I)f\otimes g,f\otimes g\rangle =\int_{\RR} \langle
E_{H(t)}(I)f,f\rangle | g(t)|^2dt
$$
for the spectral projections and
$$
\rho^{f\otimes g}_{H}=\int_{\RR} \rho^{f}_{H(t)}| g(t)|^2dt
$$
for the spectral measures.
\end{rem}
See \cite{RS-4}, p. 280 ff, in particular Thm XIII.85. The latter formula makes
the connection to spectral averaging clear.

Note that the obvious isometric isomorphism gives
$$
\kk = \hh\otimes L^2(\RR) .
$$
We will use this additional structure and write, e.g.
$$
\dir{A}:=A\otimes 1
$$
for the canonical extension of $A$ (which is a selfadjoint operator in $\hh$)
to $\kk$. In much the same way we extend the position operator $Q$. Using some
ideas from \cite{Howland-87} we introduce the following: $T=\tanh Q$ the
maximal multiplication operator in $L^2(\RR)$ with $\tanh$, as well as $D:=
\arctan(P)$, where $P=-i\frac{d}{dt}$ is the momentum operator in  $L^2(\RR)$.

\begin{prop} \emph{(\cite[Lemma 2.9]{Howland-87})} On $L^2(\RR)$,
consider the
operators $T$ and $D$ above. Then $i[T,D]=C$ is positive definite.
\end{prop}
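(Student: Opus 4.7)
The plan is to compute $i[T,D]$ explicitly on $\mathcal S(\RR)$ and exhibit it in the sandwich form $\mathrm{sech}(Q)\cdot M\cdot\mathrm{sech}(Q)$ with $M$ a Fourier multiplier whose symbol has constant sign on $\RR\setminus\{0\}$; since both outer factors have trivial kernel, positive-definiteness of $C=i[T,D]$ then follows as a quadratic-form statement.

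First I would use the integral representation
$$
\arctan(\xi)=\int_0^\infty\frac{e^{-y}\sin(\xi y)}{y}\,dy,
$$
verified by differentiating in $\xi$ under the integral and using $\int_0^\infty e^{-y}\cos(\xi y)\,dy=(1+\xi^2)^{-1}$. Via the spectral theorem for $P$ this gives $D\phi=\int_0^\infty y^{-1}e^{-y}\sin(yP)\phi\,dy$, convergent strongly on the domain of $P$ (and hence everywhere since $D$ is bounded). Because $\sin(yP)$ acts on Schwartz functions as the antisymmetric translation $(\sin(yP)f)(x)=(f(x+y)-f(x-y))/(2i)$, the commutator with the multiplication operator $T=\tanh(Q)$ reduces to a manageable integral operator on $\mathcal S(\RR)$.

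The key algebraic input is then the identity $\tanh(a)-\tanh(b)=\sinh(a-b)/(\cosh a\cosh b)$: applying it in the explicit expression for $[T,D]f$, the $\cosh$-factors in the denominator factor out on both sides and one gets $[T,D]=\mathrm{sech}(Q)\cdot K\cdot\mathrm{sech}(Q)$, where $K$ is the convolution operator with the even kernel proportional to $e^{-|y|}\sinh|y|/|y|=(1-e^{-2|y|})/(2|y|)$. Writing $(1-e^{-2|y|})/|y|=\int_0^2 e^{-s|y|}\,ds$ and invoking $\widehat{e^{-s|\cdot|}}(\xi)=2s/(s^2+\xi^2)$, the Fourier symbol of $K$ evaluates in closed form to a constant multiple of $\log(1+4/\xi^2)$, which is strictly positive on $\RR\setminus\{0\}$.

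The main obstacle I anticipate is the careful bookkeeping of signs and numerical factors through the two integral representations and the Fourier transform, and justifying the interchanges of integration order (the $y$-integrand is only conditionally convergent at $y=\infty$ before the $\sinh\cdot e^{-y}$ cancellation is exploited). Once the middle factor $M$ is identified with the correct overall sign, the conclusion is immediate: writing $\psi:=\mathrm{sech}(Q)\phi$, the form $\langle\phi,C\phi\rangle=\langle\psi,M\psi\rangle$ equals a positive constant times $\int|\widehat\psi(\xi)|^2\log(1+4/\xi^2)\,d\xi$, which is strictly positive whenever $\psi\ne 0$, i.e.\ whenever $\phi\ne 0$. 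This gives the asserted positive-definiteness of $C$.
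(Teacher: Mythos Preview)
The paper does not supply its own proof of this proposition; it simply quotes Howland's Lemma~2.9. So there is no in-paper argument to compare against, and your proposal amounts to a reconstruction of the computation behind that citation.

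Your strategy is sound and essentially complete. The integral representation of $\arctan$, the reduction of $[T,\sin(yP)]$ via the addition formula $\tanh a-\tanh b=\sinh(a-b)/(\cosh a\cosh b)$, and the resulting sandwich $\mathrm{sech}(Q)\,K\,\mathrm{sech}(Q)$ with $K$ the convolution by $e^{-|y|}\sinh|y|/|y|=\tfrac12(1-e^{-2|y|})/|y|$ all check out. The Fourier computation giving the multiplier $\tfrac12\log(1+4/\xi^2)$ is correct, and the strict-positivity argument at the end is valid since $\mathrm{sech}(Q)$ is injective.

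One point you should not leave implicit: with the paper's convention $P=-i\,d/dt$ one has $[f(Q),P]=if'(Q)$, hence $i[\tanh Q,P]=-\mathrm{sech}^2 Q<0$, and the same sign propagates through your computation: the kernel $e^{-|y|}\sinh|y|/|y|$ actually appears with an overall \emph{minus} sign once the commutator is unwound. So the calculation yields that $i[T,D]$ is \emph{negative} definite (equivalently $i[D,T]>0$). This is immaterial for the application of Putnam--Kato (replace $\dir{D}$ by $-\dir{D}$), but since you explicitly flag sign bookkeeping as the main obstacle, you should state the outcome rather than defer it to ``the correct overall sign.'' A second minor point: the multiplier $\log(1+4/\xi^2)$ blows up at $\xi=0$, so $K$ itself is unbounded; it is worth saying explicitly that the sandwiched operator is nonetheless bounded (as it must be, being the commutator of two bounded operators) and that the quadratic-form identity therefore extends from $\mathcal S(\RR)$ to all of $L^2(\RR)$ by density.
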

We next infer the following result of Putnam and Kato \cite{Kato-68,RS-4}:

\begin{prop}\label{putnam}
 Let $H$ and $D$ be selfadjoint and $D$ be bounded. If $C=i[T,D]\ge 0$, then
$H$ is absolutely continuous on $Range(C)$.
\end{prop}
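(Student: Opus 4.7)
The plan is the classical Kato--Putnam virial-identity argument, reading the hypothesis as $C = i[H,D]\ge 0$ (the displayed $T$ appears to be a typographical slip, since the conclusion concerns $H$). The starting point is the formal computation, valid for $\phi \in D(H)$:
$$\frac{d}{dt}\bigl\langle e^{-itH}\phi,\,D\, e^{-itH}\phi\bigr\rangle = \bigl\langle e^{-itH}\phi,\,i[H,D]\, e^{-itH}\phi\bigr\rangle = \bigl\|C^{1/2} e^{-itH}\phi\bigr\|^2.$$
Integrating from $-T$ to $T$, the left side telescopes to a boundary term $\langle e^{-iTH}\phi, D e^{-iTH}\phi\rangle - \langle e^{iTH}\phi, D e^{iTH}\phi\rangle$ bounded in absolute value by $2\|D\|\|\phi\|^2$, since $e^{\pm iTH}$ is unitary. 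Letting $T\to\infty$ yields the uniform bound
$$\int_{\RR}\bigl\|C^{1/2} e^{-itH}\phi\bigr\|^2\, dt \le 2\|D\|\|\phi\|^2, \qquad \phi \in \hh,$$
the passage from $D(H)$ to all of $\hh$ being by density.

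To conclude absolute continuity, I fix $\psi = C^{1/2}\eta$ in $\mathrm{Range}(C^{1/2})$ and examine the Fourier transform of the spectral measure $\rho^{\psi}_{H}$:
$$\widehat{\rho^{\psi}_{H}}(t) = \bigl\langle\psi,\, e^{-itH}\psi\bigr\rangle = \bigl\langle\eta,\,C^{1/2} e^{-itH} C^{1/2}\eta\bigr\rangle.$$
Pointwise Cauchy--Schwarz gives $|\widehat{\rho^{\psi}_{H}}(t)| \le \|\eta\|\cdot\|C^{1/2} e^{-itH}(C^{1/2}\eta)\|$, and squaring and applying the integral estimate above to $\phi = C^{1/2}\eta$ shows $\widehat{\rho^{\psi}_{H}}\in L^2(\RR)$. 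Plancherel's theorem then represents $\rho^{\psi}_{H}$ by an $L^2$-density with respect to Lebesgue measure, so it is absolutely continuous. Since $\hh_{ac}(H)$ is closed and contains $\mathrm{Range}(C)\subseteq \mathrm{Range}(C^{1/2})$, the stated conclusion follows.

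The main obstacle is making the virial identity rigorous when $H$ is unbounded, as $[H,D]$ is a priori only a sesquilinear form on $D(H)$. The assumption that $C = i[H,D]$ is a well-defined bounded nonnegative operator is precisely what one needs: together with the standard fact that $t\mapsto e^{-itH}\phi$ is $C^1$ into $\hh$ with derivative $-iHe^{-itH}\phi$ for $\phi\in D(H)$, it justifies the derivative computation on $D(H)$ and hence, by continuity, the integral bound on $\hh$. Everything beyond that point is pure analysis --- a Cauchy--Schwarz estimate followed by Plancherel --- and is routine.
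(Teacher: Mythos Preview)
Your argument is the standard Kato--Putnam proof via $H$-smoothness, and it is correct; you are also right that the displayed $T$ in the hypothesis must be read as $H$ for the statement to make sense. One small point: in the derivative computation you should interpret $i[H,D]$ as the sesquilinear form $(\phi,\psi)\mapsto i(\langle H\phi,D\psi\rangle-\langle D\phi,H\psi\rangle)$ on $D(H)\times D(H)$, rather than writing $HD\psi$, since $D\psi$ need not lie in $D(H)$; the hypothesis is precisely that this form extends to the bounded positive operator $C$, and with that reading your calculation goes through cleanly.

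As for comparison with the paper: the paper does not prove this proposition at all. It is quoted as a known result of Putnam and Kato, with references to \cite{Kato-68} and \cite{RS-4}. Your proof is essentially the original one (the virial/commutator estimate giving Kato smoothness of $C^{1/2}$, followed by the Plancherel argument), so there is no methodological divergence to discuss --- you have simply supplied what the authors chose to cite.
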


 \begin{cor}
  The operator $\hat{\dir{H}}=\int^\oplus_{\RR}(A+\tanh t B)dt$ is absolutely
continuous on $\overline{Range (B)}\otimes L^2(\RR)$.
 \end{cor}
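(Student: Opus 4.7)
The plan is to apply the Putnam--Kato proposition (Proposition~\ref{putnam}) to $H:=\hat{\dir{H}}$ with the conjugate operator $\tilde D:=1\otimes D$, where $D=\arctan(P)$ is the bounded selfadjoint operator on $L^2(\RR)$ introduced just before the preceding proposition. Under the identification $\kk=\hh\otimes L^2(\RR)$, the fibered operator decomposes as
\[
\hat{\dir{H}}=A\otimes 1 + B\otimes T,
\]
where $T=\tanh Q$, since the multiplication operator $\tanh Q$ on $L^2(\RR)$ corresponds fiberwise to multiplication by $\tanh t$.

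First I would verify that $\tilde D$ is bounded and selfadjoint on $\kk$ (immediate, since $D$ is) and compute the commutator $i[\hat{\dir{H}},\tilde D]$. The term $A\otimes 1$ commutes with $1\otimes D$ because the two factors act on disjoint tensor slots and $D$ is bounded, so it preserves $D(A\otimes 1)=D(A)\otimes L^2(\RR)$. Hence only the second piece contributes, giving
\[
i[\hat{\dir{H}},\tilde D]=i[B\otimes T,\,1\otimes D]=B\otimes i[T,D]=B\otimes C.
\]
By the previous proposition $C=i[T,D]>0$ on $L^2(\RR)$, and by hypothesis $B\ge 0$ on $\hh$, so the tensor product $B\otimes C$ is a nonnegative bounded operator on $\kk$.

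Proposition~\ref{putnam} then asserts that $\hat{\dir{H}}$ is absolutely continuous on $\mathrm{Range}(B\otimes C)$, and since the absolutely continuous subspace of a selfadjoint operator is closed, the same holds on its closure. The final step is to identify this closure: as $C$ is positive definite, $\overline{\mathrm{Range}(C)}=L^2(\RR)$, while $\overline{\mathrm{Range}(B\otimes C)}=\overline{\mathrm{Range}(B)}\otimes\overline{\mathrm{Range}(C)}$, so
\[
\overline{\mathrm{Range}(B\otimes C)}=\overline{\mathrm{Range}(B)}\otimes L^2(\RR),
\]
which is exactly the subspace claimed.

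The point requiring most care is the commutator identity, because $A$ is unbounded and one must check that $[A\otimes 1,1\otimes D]=0$ in a form sufficient to feed into Putnam--Kato; this is handled by noting that $1\otimes D$ leaves the domain of $\hat{\dir{H}}$ invariant (since $D$ is bounded and commutes with fiberwise operations) and that on this common domain the cross commutator vanishes identically. Once this is in place the proof is essentially a two-line application of the preceding proposition combined with the elementary identification of the closed range of a tensor product.
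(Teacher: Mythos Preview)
Your proof is correct and follows exactly the same route as the paper: compute $i[\hat{\dir{H}},1\otimes D]=B\otimes C\ge 0$, invoke Putnam--Kato, and use the positive definiteness of $C$ to identify $\overline{\mathrm{Range}(B\otimes C)}=\overline{\mathrm{Range}(B)}\otimes L^2(\RR)$. The paper's version is simply terser, omitting the explicit justification you give for the vanishing of $[A\otimes 1,1\otimes D]$ and for the closed-range identity.
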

\begin{proof}
 By what we know from above,
$$
i[\hat{\dir{H}},\dir{D}]=\dir{B}\dir{C}=B\otimes C\ge 0.$$
Since $C$ is positive definite it follows that $Range(C)$ is dense in
$L^2(\RR)$.
\end{proof}
\begin{proof}[\textbf{Proof of Theorem \ref{Thm1}}]
 \textbf{Step 1:} The preceding Corollary and the above Remark \ref{Rem}
give that for any $\Phi\in \overline{Range (B)}$, $g\in L^2$,
$$
\int \rho_{A+\tanh t B}^{\Phi} |g(t)|^2 dt<<dt,
$$
where the latter indicates absolute continuity with respect to Lebesgue measure.

\noindent \textbf{Step 2:} By specializing and change of variables:
 For any $\Phi\in \overline{Range (B)}$, $g\in L^\infty$ with compact support:
$$
\int \rho_{A+ t B}^{\Phi} |g(t)|^2 dt<<dt,
$$
Now by approximation, we get arbitrary positive $h\in L^1$ and, by linearity,
the assertion of the Theorem.
\end{proof}
A standard extension formulated in a way that is suited for the application we
have in mind is the following Corollary from the proof of Theorem \ref{Thm1}.
\begin{cor}\label{cor1.2}
 Let $A$ and $B$ be as above and assume that $\overline{\{ \varphi(A)B f\mid
f\in \hh\}}=\hh$. Then, for any  $h\in L^1$ and any $\phi\in\hh$:
$$
\int_{\RR} \langle
E_{A+tB}(\cdot)\phi,\phi\rangle h(t)dt<<dt .
$$
\end{cor}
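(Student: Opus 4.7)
My strategy is to extend Theorem~\ref{Thm1} from $\Phi\in\overline{Range(B)}$ to arbitrary $\phi\in\hh$ by density, handling the resulting base case within the direct-integral framework of Section~\ref{sec2}. First I would observe that the assignment $\phi\mapsto\int\rho^\phi_{A+tB}h(t)\,dt$ is Lipschitz from $\hh$ into the space of signed Borel measures on $\RR$ with total-variation norm: the uniform-in-$t$ estimate
$$
|\rho^\phi_{A+tB}(I)-\rho^{\phi'}_{A+tB}(I)|\le (\|\phi\|+\|\phi'\|)\,\|\phi-\phi'\|,
$$
valid for every Borel set $I$, together with $h\in L^1$ yields the continuity. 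Since the Lebesgue-absolutely-continuous measures form a closed subset in total variation, the conclusion holds for every $\phi\in\hh$ as soon as it holds on a dense subset; by hypothesis such a dense subset is $\{\varphi(A)Bf:f\in\hh\}$.

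\textbf{Base case via the direct integral.} For $\phi=\varphi(A)Bf$ I would return to the direct-integral argument. The Corollary preceding the theorem shows that $\hat{\dir{H}}$ is absolutely continuous on $\overline{Range(B)}\otimes L^2(\RR)$, so $Bf\otimes g$ lies in the ac subspace $\hh_{ac}(\hat{\dir{H}})$ for every $g\in L^2(\RR)$. This subspace is closed and invariant under the commutant of $\hat{\dir{H}}$, which in particular contains the multiplication operators $1\otimes M_\chi$ on the $L^2(\RR)$-factor and all bounded Borel functions of $\hat{\dir{H}}$. My plan is to push $(\varphi(A)Bf)\otimes g$ into $\hh_{ac}(\hat{\dir{H}})$ using these invariances; once this inclusion is established, Remark~\ref{Rem} combined with the change-of-variables step from the proof of Theorem~\ref{Thm1} converts absolute continuity of the $\hat{\dir{H}}$-spectral measure at $(\varphi(A)Bf)\otimes g$ into absolute continuity of $\int\rho^{\varphi(A)Bf}_{A+tB}h(t)\,dt$ for every $h\in L^1$, finishing the base case.

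\textbf{Main obstacle.} The hardest step is precisely this inclusion, because $\varphi(A)\otimes 1$ is not in the commutant of $\hat{\dir{H}}$ (fiberwise $[\varphi(A),A+\tanh(t)B]=\tanh(t)[\varphi(A),B]$), so applying it to the known ac vector $Bf\otimes g$ need not produce an ac vector. The natural surrogate $\varphi(\hat{\dir{H}})(Bf\otimes g)$ does remain in $\hh_{ac}(\hat{\dir{H}})$, but its fibers $g(t)\varphi(A+\tanh(t)B)Bf$ coincide with those of $(\varphi(A)Bf)\otimes g$ only in the limit $t\to 0$. To close the gap I would introduce the closed subspace
$$
W=\{\phi\in\hh : \phi\otimes g\in\hh_{ac}(\hat{\dir{H}}) \text{ for every } g\in L^2(\RR)\},
$$
which clearly contains $\overline{Range(B)}$, and argue that $W$ is invariant under $\varphi(A)$ by combining (i) Borel functional calculus of $\hat{\dir{H}}$ applied to $\Psi\otimes g$ with $\Psi\in\overline{Range(B)}$, (ii) strong continuity of $s\mapsto\varphi(A+sB)\Psi$ at $s=0$, and (iii) invariance under the multiplication operators $1\otimes M_\chi$. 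The density hypothesis then forces $W=\hh$, completing the argument.
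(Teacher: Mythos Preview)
Your overall plan---work inside $\kk$, show that the closed subspace
$W=\{\phi\in\hh:\phi\otimes g\in\hh_{ac}(\hat{\dir{H}})\ \forall g\in L^2\}$
is all of $\hh$, and then read off the averaged absolute continuity via Remark~\ref{Rem} and the change of variables---is exactly the route the paper takes (there phrased as ``$Range(B)\otimes L^2$ is cyclic for $\hat{\dir{H}}$'', with the cyclicity argument outsourced to Howland). Your density reduction in total variation is fine but in fact superfluous once $W=\hh$ is known.

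The genuine gap is in your ``Main obstacle'' step. Ingredient~(ii), the strong continuity $s\mapsto\varphi(A+sB)\Psi$ at $s=0$, does not do what you need. Localising $g$ near $t=0$ gives
\[
\bigl\|\varphi(\hat{\dir{H}})(\Psi\otimes g)-(\varphi(A)\Psi)\otimes g\bigr\|
\le \|g\|\sup_{|t|\le\delta}\|(\varphi(A+\tanh t\,B)-\varphi(A))\Psi\|,
\]
so $(\varphi(A)\Psi)\otimes g_\delta$ is \emph{close} to $\hh_{ac}(\hat{\dir{H}})$ for $g_\delta$ concentrating at $0$; but the vectors $(\varphi(A)\Psi)\otimes g_\delta$ do not converge in $\kk$, and passing to the decomposable picture $\hh_s(\hat{\dir{H}})=\int^\oplus K(t)^\perp\,dt$ only yields information about $K(t)^\perp$ on shrinking neighbourhoods of $0$, which is vacuous in the limit. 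I could not turn (i)+(ii)+(iii) into an inclusion $(\varphi(A)\Psi)\otimes g\in\hh_{ac}(\hat{\dir{H}})$ for a single fixed $g$, and I don't believe it can be done along these lines.

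What does work---and is the content of the Howland argument the paper cites---is to replace (ii) by the resolvent identity. For $\Psi\in W$, $z\notin\RR$ and $g\in L^2$ one computes fibrewise
\[
(\hat{\dir{H}}-z)\bigl((A-z)^{-1}\Psi\otimes g\bigr)=\Psi\otimes g+\bigl(B(A-z)^{-1}\Psi\bigr)\otimes(\tanh\cdot\, g),
\]
hence
\[
(A-z)^{-1}\Psi\otimes g=(\hat{\dir{H}}-z)^{-1}\Bigl[\Psi\otimes g+\bigl(B(A-z)^{-1}\Psi\bigr)\otimes(\tanh\cdot\, g)\Bigr].
\]
The bracket lies in $\hh_{ac}(\hat{\dir{H}})$ because $\Psi\in W$ and $B(A-z)^{-1}\Psi\in Range(B)\subset W$, and $(\hat{\dir{H}}-z)^{-1}$ preserves $\hh_{ac}$. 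Thus $(A-z)^{-1}W\subset W$, so $W$ is $\varphi(A)$-invariant, and the density hypothesis forces $W=\hh$. The point is that the ``error'' between $\varphi(A)$ and $\varphi(\hat{\dir{H}})$ is not merely small: it factors through $B$ and therefore lands back in the subspace you already control. Swap your (ii) for this and your proof is complete.
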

\begin{proof}
We consider the operators $\hat{\dir{H}}$, $\dir{A}$ and $\dir{B}$ on $\kk$ as
above. By what we proved above, the absolutely continuous subspace of
$\hat{\dir{H}}$ contains $Range(B)\otimes L^2(\RR)$. Moreover it is cyclic for
$\hat{\dir{H}}$ and closed. Therefore, the arguments from \cite{Howland-87},
Proof of Theorem 2.7, p.61 give that the absolutely continuous subspace of
$\hat{\dir{H}}$ is all of $\kk$. As in the above proof this implies the asserted
absolute continuity.
\end{proof}

It is time to compare what we have shown so far with what is known by other
methods, see \cite{CHM-96,CHK-03,CHK-07, Sto-10}.

\begin{rem}
 \begin{itemize}
  \item Strictly speaking, the results of \cite{CHK-03,CHK-07, Sto-10} and our
Corollary above are not comparable, but the latter can be used to deduce what
we have shown here. More precisely:
\item In \cite{CHK-03,CHK-07, Sto-10} instead of $h(t)dt$ more general measures
are allowed. The continuity of $\nu:=\int_{\RR} \langle
E_{A+tB}(\cdot)\phi,\phi\rangle d\mu(t)$ as well as that of $\mu$ is measured in
terms of the modulus of continuity $s(\mu,\varepsilon):=\sup\{\mu([a,b])\mid
a,b\in\RR, b-a=\varepsilon\}$ and the conclusion is that $s(\nu,\varepsilon)\le
C s(\mu,\varepsilon)$, provided $\phi\in Range(B^\frac12)$.
\item Clearly, the latter estimate directly does not give anything in the case
of our result above: for absolutely continuous $\mu=h(t)dt$ the modulus of
continuity does not  need to decay at a certain rate as $\varepsilon$ tends to
zero. But, we can approximate $h$ by bounded $h_n$ in a suitable way. At the
same time, we can approximate any $\phi\in \overline{Range(B)}$ by a sequence
$\phi_n\in Range(B^\frac12)$ the resulting $\nu_n:= \int_{\RR} \langle
E_{A+tB}(\cdot)\phi_n,\phi_n\rangle h_n(t)dt$ will converge to $\nu$ and all
the $\nu_n$ are absolutely continuous with respect to $dt$, thus giving the
assertion of our Theorem \ref{Thm1}.
\item In \cite{CHM-96} the method of proof is pretty much similar to our
strategy here. There, a direct application of Mourre estimates is used to prove
spectral averaging and Wegner estimates. While their result concerns a more
general setup it requires even differentiability of the density $h$; see Thm
1.1, Cor. 1.2 and 1.4 in the cited paper for results analogous to ours.

 \end{itemize}
\end{rem}

\section{Absolute continuity of the IDS; a very short proof.}

We consider $L^2(\RR^d)$ and the operators
\begin{equation}\label{eqn9}
H^\omega = -\Delta + \sum_{n \in \ZZ^d} \omega_n u(\cdot -n)
\end{equation}
where $u$ is a non-negative bounded measurable function that is positive on some
open set.
Let $\omega_n, n\in \ZZ^d$ be i.i.d random variables with a
probability distribution
$\mu$ which is absolutely continuous and has a compactly
supported, integrable density $h$. We denote by $\PP:=\bigotimes_{n\in\ZZ}\mu$
the product measure and by $\EE$ the corresponding expectation.

By $\Lambda(0)$ we denote the unit unit cube. In view of the Pastur-Shubin trace
formula we can express the integrated density of states, IDS, in terms of
\begin{equation}\label{dos}
\nn(I) = \EE \left[ Tr(\chi_{\Lambda(0)} E_{H^\omega} (I) \chi_{\Lambda(0)})
\right],
\end{equation}
for any bounded Borel set $I$. See \cite{Ves-08} for an extensive bibligraphy
on the IDS and the proof of the trace formula in a more general situation. The
IDS is
quite often also expressed as the distribution function $N(E):\nn(-\infty,E]$
of the measure $\nn$ defined above.
\begin{cor}
 In the situation above, $\nn<<dt$.
\end{cor}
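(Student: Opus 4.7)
The plan is to single out one random coupling constant, apply Corollary \ref{cor1.2} to the resulting one-parameter family, and then combine these absolute-continuity statements via Fubini with the trace formula \eqref{dos}.

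First I would write $H^\omega = A^{\omega'} + \omega_0 B$, where $B := u$ is a bounded non-negative multiplication operator and
$$
A^{\omega'} := -\Delta + \sum_{n \neq 0} \omega_n u(\cdot - n)
$$
depends only on $\omega' := (\omega_n)_{n \neq 0}$. Expanding the trace on an ONB $\{e_k\}$ of $L^2(\Lambda(0))$ and using Fubini give
$$
\nn(I) = \sum_k \EE_{\omega'}\left[\int_\RR \bigl\langle E_{A^{\omega'}+tB}(I) e_k, e_k\bigr\rangle\, h(t)\, dt\right].
$$
For each fixed $\omega'$ and $k$, Corollary \ref{cor1.2} applied with $A = A^{\omega'}$, $B = u$ and $\phi = e_k$ makes the inner integral, viewed as a measure in $I$, absolutely continuous. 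Since each such inner measure vanishes on Lebesgue-null sets, so does $\nn$, giving $\nn \ll dt$.

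The main obstacle is verifying the cyclicity hypothesis of Corollary \ref{cor1.2}: one needs $\overline{\{\varphi(A^{\omega'}) u f : f \in L^2(\RR^d)\}} = L^2(\RR^d)$ for suitable bounded measurable $\varphi$. Since $A^{\omega'} = -\Delta + W^{\omega'}$ is a Schr\"odinger operator with bounded potential and $u$ is strictly positive on an open set $U$, this reduces to a unique continuation statement: if $\varphi(A^{\omega'}) g$ vanishes on $U$, then $g \equiv 0$. Taking the family $\varphi_t(\lambda) = e^{-t(\lambda - \lambda_0)}$ with $\lambda_0 \le \inf \sigma(A^{\omega'})$, the resulting function $e^{-t(A^{\omega'} - \lambda_0)} g$ would vanish on the open spacetime set $(0,\infty) \times U$, and parabolic unique continuation for second-order parabolic operators with bounded coefficients forces $g \equiv 0$.
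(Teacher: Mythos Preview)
Your argument follows essentially the same route as the paper's: split off the coupling $\omega_0$, write $H^\omega=A^{\omega'}+\omega_0 B$ with $B=u$, expand the trace formula on an orthonormal basis, and apply Corollary~\ref{cor1.2} fiberwise in $\omega'$ to the one-parameter family $t\mapsto A^{\omega'}+tB$. The only notable difference is how the cyclicity hypothesis of Corollary~\ref{cor1.2} is handled: the paper simply invokes Proposition~A2.2 of \cite{CH-94}, whereas you sketch your own unique-continuation argument via the heat semigroup $e^{-t(A^{\omega'}-\lambda_0)}$ and parabolic unique continuation. Your sketch is in the right spirit (and close to how such results are actually proved), but citing the established result is both shorter and avoids having to pin down a precise parabolic unique-continuation theorem with the required generality.
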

\begin{proof}
 Note that in the situation given we can apply the cyclicity result of
\cite{CH-94}, Prop. A2.2, and know that for
$$
A:= -\Delta + \sum_{n \in \ZZ^d\setminus \{ 0\}} \omega_n u(\cdot -n)
\qquad B:= u(\cdot),
$$
the assumptions of Corollary \ref{cor1.2} are met. We fix an orthonormal basis
$(\phi_k)_{k\in\NN}$ of $\hh=L^2(\RR^d)$ and write
\begin{eqnarray*}
\nn(I) &=& \EE \left[ Tr(\chi_{\Lambda(0)} E_{H^\omega} (I)
\chi_{\Lambda(0)})
\right]\\
&=& \EE \left[\sum_{k\in\NN}\langle\chi_{\Lambda(0)} E_{H^\omega} (I)
\chi_{\Lambda(0)}\phi_k\mid\phi_k\rangle\right]\\
&=& \sum_{k\in\NN}\EE \left[ E_{H^\omega} (I)
\chi_{\Lambda(0)}\phi_k\mid\chi_{\Lambda(0)}\phi_k\rangle\right] .
\end{eqnarray*}
It suffices to show that every sum in the term is absolutely continuous with
respect to $dt$ and this works as follows:
$$
 \EE \left[ E_{H^\omega} (I)
\chi_{\Lambda(0)}\phi_k\mid\chi_{\Lambda(0)}\phi_k\rangle\right]=$$
$$=
\EE \left[ E_{(-\Delta + \sum_{n \in \ZZ^d\setminus \{ 0\}} \omega_n u(\cdot
-n)+\omega_0 B)} (I)
\chi_{\Lambda(0)}\phi_k\mid\chi_{\Lambda(0)}\phi_k\rangle\right] =$$
$$=
\EE \left[\int_\RR E_{(-\Delta + \sum_{n \in \ZZ^d\setminus \{ 0\}} \omega_n
u(\cdot
-n) + t B)} (I)
\chi_{\Lambda(0)}\phi_k\mid\chi_{\Lambda(0)}\phi_k\rangle dt\right] .
$$
For fixed $\omega':= (\omega_n)_{n\in\ZZ^d\setminus \{ 0\}}$ the inner integral
is seen to give an absolutely continuous measure: set $A$ as above and apply
Cor. \ref{cor1.2}. The expectation preserves the absolute continuity and that
establishes the claim.
\end{proof}

Of course, an additional periodic background potential $V_0$ would not change
the proof; all the ingredients we cited are valid in this case as well.
\paragraph*{Acknowledgement.}
 This work was started during a visit of P.S. to India; he wants to thank for
generous support by the IMSC, Chennai and the hospitality of the staff there as
well as for the generous support by the DFG (German Science Foundation).

\thebibliography{ll}

\bibitem{AENSS-06} M.~Aizenman, A.~Elgart, S.~Naboko, J.~Schenker and G.~Stolz.
\newblock Moment Analysis for Localization in Random Schr\"odinger Operators.
 \newblock {\em Invent. Math.} \textbf{163}(2006), 343--413,

%%%%%%%%%%%%%%%%%%
\bibitem{CH-94}
J.-M. Combes and P.~D. Hislop: \newblock{Localization for some continuous
random hamiltonians in $d$-dimensions}, \emph{J. Funct. Anal}
\textbf{124}(1994), 149--180

\bibitem{CHK-03}
J.-M. Combes, P.~D. Hislop and F.~Klopp:
\newblock H\"older continuity of the integrated density of states for some
   random operators at all energies.
\newblock {\em Int. Math. Res. Not.} \textbf{4}(2003), 179--209
%%%%%%%%%%%%%%%%%%

\bibitem{CHK-07}
J.-M. Combes, P.~D. Hislop and F.~Klopp:
\newblock {An optimal {W}egner estimate and its application to the global
   continuity of the integrated density of states for random {S}chr\"odinger
   operators.}
\newblock {\em Duke Math. J.}, \textbf{140}(2007), 469--498

\bibitem{CHM-96}
J.-M. Combes, P.~D. Hislop and E.~Mourre:
\newblock {Spectral Averaging, perturbation of singular spectra, and
localization.}
\newblock {\em Trans.AMS}, \textbf{348}(1996), no. 12, 4883--4894

\bibitem{Donoghue-65} W.~Donoghue: \newblock{On the perturbation of spectra}.
\emph{Comm. Pure Appl. Math.} \textbf{18}(1965), 559--579

\bibitem{Howland-87} J.~Howland: \newblock{Perturbation Theory of Dense Point
Spectra,} \emph{J. Funct. Anal.} \textbf{74}(1987), 52--80

\bibitem{Kato-68} T:~Kato: \newblock{Smooth measures and commutators,}
 \emph{Studia Math.} \textbf{31}(1968),
535-546

\bibitem{Kotani-84} S.~Kotani:
\newblock{ Lyapunov exponents and spectra for
one-dimensional random Schr\"odinger operators,}
 in ``Proceedings, 1984 AMS
Conference on Random Matrices and Their Applications''.eds. J.E. Cohen, H. Kesten and C.M. Newman, American Mathematical Society, Contemporary Math. 50, 277 - 286, Providence (1986)

\bibitem{RS-4} M.~Reed and B.~Simon.  \newblock {\em Methods of Modern Mathematical Physics. IV: Analysis of Operators.}
\newblock Academic Press, New York, (1978)

\bibitem{SimonW-86} B.~Simon and T.~Wolff: \newblock{Singular continuous
spectrum under
rank one perturbations and localization for random Hamiltonians,} \emph{Commun.
Pure and App. Math.} \textbf{39}(1986), 75-90

\bibitem{Sto-10} P.~Stollmann: \newblock{From Uncertainty
Principles to Wegner Estimates,} \emph{Math. Phys. Anal. Geom.} \textbf{13}
(2010), no. 2, 145--157

\bibitem{Ves-08}
I.~Veseli{\'c}.
\newblock {\em Existence and regularity properties of the integrated density of
   states of random {S}chr\"odinger operators}. Volume 1917 of {\em Lecture
   Notes in Mathematics}.
\newblock Springer-Verlag, Berlin, (2008).

\endthebibliography
\end{document}